\title{\LARGE \bf
Interactive Function Computation with Reconstruction Constraints
}
\author{{\large Farideh Ebrahim Rezagah and Elza Erkip}\\
     \normalsize Dept. of ECE, Polytechnic Institute of NYU\\
     Email: fer216@nyu.edu, elza@poly.edu}
\newcommand{\Ta}{\mathbb{T}_A}
\newcommand{\Tb}{\mathbb{T}_B}
\newcommand{\Tj}{\mathbb{T}_\indx}
\newcommand{\X}{X}
\newcommand{\Y}{Y}
\newcommand{\alphX}{\mathcal{\X}}
\newcommand{\alphY}{\mathcal{\Y}}
\newcommand{\U}{U}
\newcommand{\alphU}{\mathcal{\U}}
\newcommand{\x}{x}
\newcommand{\y}{y}
\newcommand{\Extra}{S}
\newcommand{\alphExtra}{\mathcal{\Extra}}
\newcommand{\ExtraA}{\Extra_A}
\newcommand{\ExtraB}{\Extra_B}
\newcommand{\alphExtraA}{\alphExtra_A}
\newcommand{\alphExtraB}{\alphExtra_B}
\newcommand{\Z}{Z}
\newcommand{\W}{W}
\newcommand{\Za}{\Z_A}
\newcommand{\Zb}{\Z_B}
\newcommand{\Zj}{\Z_\indx}
\newcommand{\fn}{f}
\newcommand{\alphhatXa}{\hat{\mathcal{W}}_A}
\newcommand{\alphhatXb}{\hat{\mathcal{Z}}_B}
\newcommand{\alphhatYa}{\hat{\mathcal{Z}}_A}
\newcommand{\alphhatYb}{\hat{\mathcal{W}}_B}
\newcommand{\hatXa}{\hat{\W}_{A}}
\newcommand{\hatXb}{\hat{\Z}_{B}}
\newcommand{\hatYa}{\hat{\Z}_{A}}
\newcommand{\hatYb}{\hat{\W}_{B}}
\newcommand{\hatZj}{\hat{\Z}_{\indx}}
\newcommand{\hatWk}{\hat{\W}_{\indxx}}
\newcommand{\fj}{{\fn}_\indx(\X,\Y)}
\newcommand{\fA}{{\fn}_A(\X,\Y)}
\newcommand{\fB}{{\fn}_B(\X,\Y)}
\newcommand{\fjn}{{\fn}_\indx^{(n)}(\X^n,\Y^n)}
\newcommand{\fji}{{\fn}_\indx(\X_\blockindex,\Y_\blockindex)}
\newcommand{\dd}{d_\indx}
\newcommand{\dr}{d_{\indx\indxx}}
\newcommand{\da}{d_{A}}
\newcommand{\db}{d_{B}}
\newcommand{\dab}{d_{AB}}
\newcommand{\dba}{d_{BA}}
\newcommand{\Da}{D_{A}}
\newcommand{\Db}{D_{B}}
\newcommand{\Dab}{D_{AB}}
\newcommand{\Dba}{D_{BA}}
\newcommand{\round}{t}
\newcommand{\msg}{M}
\newcommand{\alphmsg}{\mathcal{M}}
\newcommand{\rate}{R}
\newcommand{\gd}{\Psi}
\newcommand{\gr}{\psi}
\newcommand{\gaX}{\gr_{A}}
\newcommand{\gbX}{\gd_{B}}
\newcommand{\gaY}{\gd_{A}}
\newcommand{\gbY}{\gr_{B}}
\newcommand{\gaXi}{\gr_{A,\blockindex}}
\newcommand{\gbXi}{\gd_{B,\blockindex}}
\newcommand{\gaYi}{\gd_{A,\blockindex}}
\newcommand{\gbYi}{\gr_{B,\blockindex}}
\newcommand{\blockindex}{i}
\newcommand{\msgindex}{j}
\newcommand{\indx}{k}
\newcommand{\indxx}{l}
\newcommand{\encod}{\phi}
\newcommand{\capreg}{\mathfrak{\rate}^{(\round)}}
\newcommand{\ratereg}{\mathcal{\rate}^{(\round)}}
\newcommand{\epn}{\epsilon_n}
\newcommand{\V}{V}
\newcommand{\covyj}{\kappa_{\y,\indx}}
\newcommand{\covxj}{\kappa_{\x,\indx}}
\newcommand{\D}{D}
\newtheorem{remark}{Remark}
\newtheorem{theorem}{Theorem}
\newtheorem{lemma}{Lemma}
\newtheorem{corollary}{Corollary}
\begin{document}

\maketitle
\thispagestyle{empty}
\pagestyle{empty}

\begin{abstract}
  This paper investigates two-terminal interactive function computation with reconstruction constraints. Each terminal wants to compute a (possibly different) function of two correlated sources, but can only access one of the sources directly. In addition to distortion constraints at the terminals, each terminal is required to estimate the computed function value at the other terminal in a lossy fashion, leading to the constrained reconstruction constraint. A special case of constrained reconstruction is the common reconstruction constraint, in which both terminals agree on the functions computed with probability one. The terminals exchange information in multiple rate constrained communication rounds. A characterization of the multi-round rate-distortion region for the above problem with constrained reconstruction constraints is provided. To gain more insights and to highlight the value of interaction and order of communication, the rate-distortion region for computing various functions of jointly Gaussian sources according to common reconstruction constraints is studied.
\end{abstract}

\section{Introduction}

The problem of computing functions of distributed correlated information sources arises in different networking scenarios, such as  sensor networks, cloud computing and smart grids. With limited communication resources, one important goal is to find the most effective way to operate the network in order to exchange minimal information while computing the desired functions with the required accuracy. For scenarios involving sensitive information, in addition to regular distortion constraints at the terminals, each terminal may wish to estimate the function computed at the other terminals within a certain accuracy. This leads to the \emph{constrained reconstruction constraint} \cite{Lapidoth2011}-\cite{Lapidoth2013}, which as a special case includes \emph{common reconstruction constraint} \cite{Steinberg2008} where all the terminals are required to agree on the computed function values with probability one.

The classical lossy source coding with side information problem investigated by Wyner and Ziv \cite{WynerZiv} involves one-way communication from the encoder to the decoder and only considers the distortion incurred in source reconstruction at the decoder. Steinberg \cite{Steinberg2008} considered an additional constraint in which the encoder is also required to reproduce the reconstruction at the decoder exactly, leading to the above common reconstruction constraint. The setting in \cite{Steinberg2008} suggests that while the side information at the decoder can be used to reduce the source rate through the \emph{binning} process as in the Wyner-Ziv setting, it cannot be further combined with the compression index, thereby resulting in a higher distortion at the decoder. Lapidoth et al \cite{Lapidoth2011} extended \cite{Steinberg2008} to allow for some distortion between the encoder's and the decoder's reconstruction, leading to the constrained reconstruction constraint.

While the above papers consider information flow from the encoder to the decoder only, Ma and Ishwar studied an interactive communication scenario where the two terminals exchange information in an alternating fashion for lossy function computation \cite{MaIshwarSomeResults}. The goal, as in the Wyner-Ziv setting, is to satisfy distortion requirements individually at each terminal. They showed that although for some functions there is no need to use multiple rounds of communication, there are functions for which every additional round of communication, subject to a fixed total sum rate, further decreases distortion.

In this paper we consider two terminal interactive function computation with reconstruction constraints, hence extending the framework in \cite{Lapidoth2011} to take into account interaction and in \cite{MaIshwarSomeResults} to take into account reconstruction constraints. We assume each terminal computes a different function of the two sources and reconstructs the function computed by the other terminal, all subject to distortion constraints. Multiple rounds of communication are allowed. Communication can take place in a sequential fashion, where terminals take turns, or in a simultaneous fashion.  We first identify the general rate-distortion region for this problem. We then study the Gaussian case with common reconstruction and compare sequential and simultaneous rate-distortion regions for different functions. Overall, our results highlight the importance of interaction under reconstruction constraints.

It is known that the coordination problem of \cite{Cuffdiss} is closely related to rate-distortion with side information. In particular, interactive coordination framework of \cite{Yassaee2012} can be used to study the problem investigated in this paper. Apart from giving an explicit characterization for the rate-distortion region with reconstruction constraints, we identify the individual set of rates for each communication round, while \cite{Yassaee2012} only considers sum rates.

\section{System Model}

We consider a two-terminal interactive function computation problem with $\round$ \textit{rounds} of communication as shown in Fig \ref{fig:setup}. Here the ``round'' refers to a one-way communication session between two terminals. Terminal A, $\Ta$, observes source $\X^n$, and terminal B, $\Tb$, observes $\Y^n$. The sources are drawn i.i.d $\sim p(\x,\y)$. The objective of $\Tj$ is to compute the function $\Zj^n=\fjn = (\fn_\indx(\X_1,\Y_1), \ldots, \fn_\indx(\X_n, \Y_n)), \indx=A,B$ in a lossy fashion and with reconstruction constraints \cite{Lapidoth2011}. Corresponding distortion measures and constraints will be discussed below.
 \begin{figure}[htbp]
   \centering
   \includegraphics[trim=8cm 9cm 8cm 0.1cm, clip, width=0.3\textwidth]{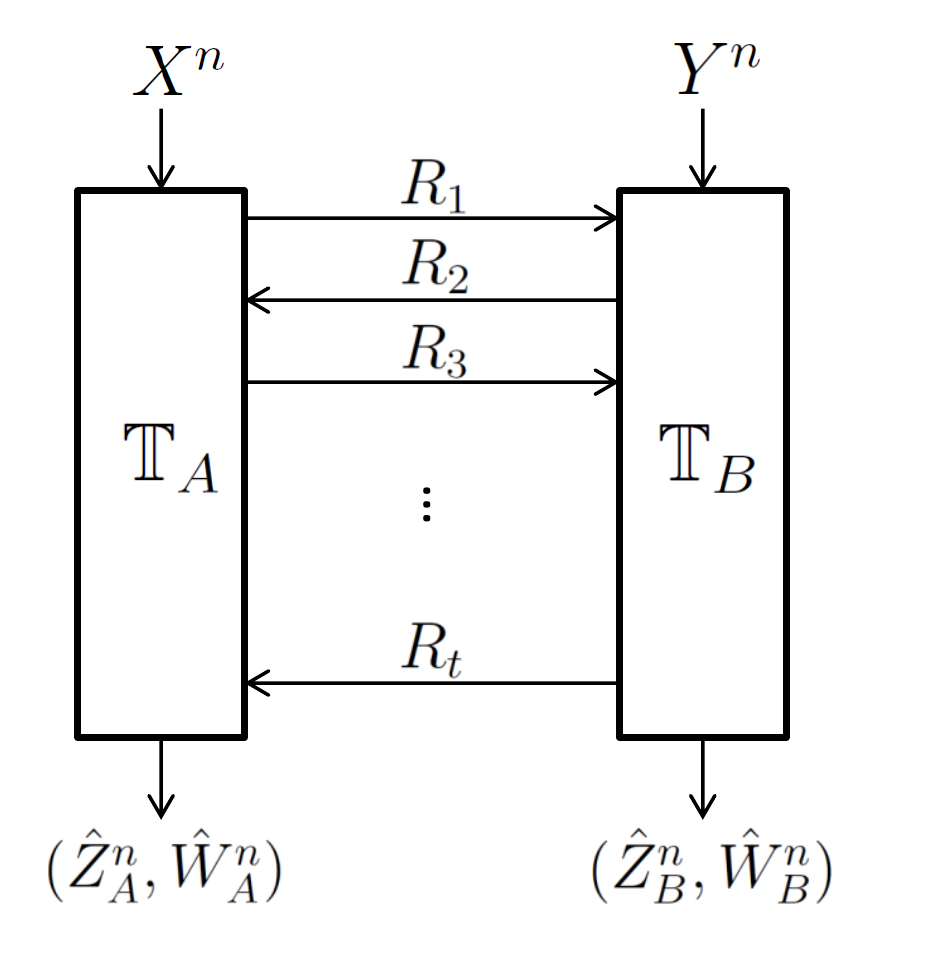}
   \caption{Interactive function computation with $\round$ rounds of communication.}
   \label{fig:setup}
 \end{figure}

The messages sent in $\round$ rounds of communication are denoted as $\msg_1,\dots,\msg_{\round}$. These messages are sent sequentially starting from $\Ta$ (or $\Tb$). Alternatively, assuming $\round$ is even, the terminals can send their messages simultaneously, with the pair $(\msg_\msgindex, \msg_{\msgindex+1}), \msgindex=2\indx-1, \indx=1,2,\dots,\round/2$ transmitted at the same time. Below, we define encoding and decoding functions for sequential transmission starting from $\Ta$; definitions for sequential transmission starting from $\Tb$ or for simultaneous transmission similarly follow.

A message sent from $\Ta$ can depend on $\X^n$ and all previously communicated messages. In a similar manner each message sent from $\Tb$ can depend on $\Y^n$ and all previous messages. More formally, we can define encoding functions as follows:
\begin{IEEEeqnarray}{rCl}
  \msg_\msgindex=\encod^{(n)}_\msgindex&(\X^n,\msg^{\msgindex-1})&\hbox{  ,}\verb" " \msgindex\hbox{ odd;} \nonumber\\
  \msg_\msgindex=\encod^{(n)}_\msgindex&(\Y^n,\msg^{\msgindex-1})&\hbox{  ,}\verb" " \msgindex\hbox{ even;} \nonumber
\end{IEEEeqnarray}
where
\begin{IEEEeqnarray}{rCl}
  \encod^{(n)}_\msgindex&:\alphX^n \times \prod_{\indx=1}^{\msgindex-1}\alphmsg_\indx \mapsto\alphmsg_\msgindex&\hbox{  ,}\verb" " \msgindex\hbox{ odd;} \nonumber\\
  \encod^{(n)}_\msgindex&:\alphY^n \times \prod_{\indx=1}^{\msgindex-1}\alphmsg_\indx \mapsto\alphmsg_\msgindex&\hbox{  ,}\verb" " \msgindex\hbox{ even.} \nonumber
\end{IEEEeqnarray}
and $\alphmsg_\msgindex=\{1,2,\dots,2^{n\rate_\msgindex}\}$ where $\rate_\msgindex$ is the rate of the $\msgindex$'th message. We assume $\msg_0=1$ is deterministic. Note that $\encod_\msgindex$ corresponds to the encoding operations at $\Ta$, for $\msgindex$ odd, and encoding operations at $\Tb$, for $\msgindex$ even. After $\round$ rounds of communication, each terminal computes its desired function using $\gd_\indx, \indx=A,B$ as follows:
\begin{IEEEeqnarray}{rCl}
  \hatYa^n=\gaY^{(n)}(\X^n,\msg^\round) \\
  \hatXb^n=\gbX^{(n)}(\Y^n,\msg^\round)
\end{IEEEeqnarray}
Furthermore, each terminal also computes an estimate of the other terminal's function using $\gr_\indx, \indx=A,B$, as follows
\begin{IEEEeqnarray}{rCl}
  \hatXa^n=\gaX^{(n)}(\X^n,\msg^\round) \\
  \hatYb^n=\gbY^{(n)}(\Y^n,\msg^\round)
\end{IEEEeqnarray}
The tuple $(\encod_1^{(n)}, \dots, \encod_\round^{(n)}, \gaY^{(n)}, \gbX^{(n)}, \gbY^{(n)}, \gaX^{(n)})$ is called an $(n,\rate_1,\dots,\rate_\round,\Da,\Db,\Dab,\Dba)$-\textit{code} if the produced sequences $\hatYa^n$, $\hatXb^n$, $\hatXa^n$ and $\hatYb^n$ satisfy
\begin{IEEEeqnarray}{l}
  \frac{1}{n}\sum_{\blockindex=1}^n E\da(\Za{}_{,\blockindex},\hatYa{}_{,\blockindex})\leq\Da \label{Adistortion}\\
  \frac{1}{n}\sum_{\blockindex=1}^n E\dab(\hatYa{}_{,\blockindex},\hatYb{}_{,\blockindex})\leq\Dab \label{Areconstruction}\\
  \frac{1}{n}\sum_{\blockindex=1}^n E\db(\Zb{}_{,\blockindex},\hatXb{}_{,\blockindex})\leq\Db \label{Bdistortion}\\
  \frac{1}{n}\sum_{\blockindex=1}^n E\dba(\hatXb{}_{,\blockindex},\hatXa{}_{,\blockindex})\leq\Dba \label{Breconstruction}
\end{IEEEeqnarray}
where $\dd(.,.)$ and $\dr(.,.)$, $\indx,\indxx =A,B$, $\indx\not=\indxx,$ are single letter distortion measures and $\Zj{}_{,\blockindex}=\fji$ is the actual value of the desired function at $\Tj$, $\indx=A,B$. Note that (\ref{Adistortion}) and (\ref{Bdistortion}) are the usual distortion constraints for computing $\fj$ at $\Tj$, $\indx=A,B$, whereas constraints (\ref{Areconstruction}) and (\ref{Breconstruction}) are the constrained reconstruction constraints \cite{Lapidoth2011}.
The rate and distortion tuple $(\rate_1,\dots,\rate_\round,\Da,\Db,\Dab,\Dba)$ is \textit{achievable} if for any $\epsilon\geq 0$ and $n$ sufficiently large there exist an $(n,\rate_1+\epsilon,\dots,\rate_\round+\epsilon,\Da+\epsilon,\Db+\epsilon,\Dab+\epsilon,\Dba+\epsilon)$-code. The set of all achievable $(\rate_1,\dots,\rate_\round,\Da,\Db,\Dab,\Dba)$ is denoted by $\capreg$ and the \textit{constrained-reconstruction rate-distortion region} for $\round$-round interactive communication is given by:
\begin{IEEEeqnarray}{c}
  \ratereg(\Da,\Db,\Dab,\Dba)\triangleq\nonumber\\
  \bigg\{(\rate_1,\dots,\rate_\round)|(\rate_1,\dots,\rate_\round,\Da,\Db,\Dab,\Dba)\in\capreg\bigg\}\nonumber
\end{IEEEeqnarray}

Similarly, we can define \textit{common-reconstruction rate-distortion region} $\ratereg_{CR}(\Da,\Db)$ by replacing conditions (\ref{Areconstruction}) and (\ref{Breconstruction}) with
\begin{IEEEeqnarray}{c}
  Pr(\hatYb^n\neq\hatYa^n)\leq\epn\label{A-CR}\\
  Pr(\hatXa^n\neq\hatXb^n)\leq\epn\label{B-CR}
\end{IEEEeqnarray}
where $\epn$ goes to zero as $n\rightarrow\infty$.
\begin{remark}
  The above set-ups differ from the interactive function computation problem of \cite{MaIshwarSomeResults} because of the additional reconstruction constraints (\ref{Areconstruction}) and (\ref{Breconstruction}), or (\ref{A-CR}) and (\ref{B-CR}). Also, they differ from \cite{Lapidoth2011} and \cite{Steinberg2008}, where only one round of communication was considered.
\end{remark}

\section{Main Result}

In this section, we provide a characterization of the rate-distortion region for $\round$-round interactive function computation with reconstruction constraints for general discrete memoryless sources. Hence we assume the alphabet sets $\alphX, \alphY, \alphhatYa, \alphhatXb, \alphhatXa$ and $\alphhatYb$ are all discrete.
\begin{theorem}\label{main1}
  For discrete memoryless sources $(\X,\Y)\sim p(\x,\y)$ and for sequential communication starting from $\Ta$, the constrained-reconstruction rate-distortion region $\ratereg(\Da,\Db,\Dab,\Dba)$ for computing functions $\fA$ and $\fB$ is given by the set of all rate vectors $(\rate_1,\dots,\rate_\round)$ such that for odd $\msgindex$
  \begin{IEEEeqnarray}{rcl}
    \rate_\msgindex\geq I(\X;\U_{\msgindex}|\Y,\U^{\msgindex-1}),\verb"  "\U_\msgindex-(\X,\ExtraA,\U^{\msgindex-1})-\Y \label{odd rate}
  \end{IEEEeqnarray}
  and for even $\msgindex$
  \begin{IEEEeqnarray}{rcl}
    \rate_\msgindex\geq I(\Y;\U_{\msgindex}|\X,\U^{\msgindex-1}),\verb"  "\U_\msgindex-(\Y,\ExtraB,\U^{\msgindex-1})-\X \label{even rate}
  \end{IEEEeqnarray}
  for some auxiliary random variables $\U_\msgindex\in\alphU_\msgindex$ $\msgindex=1,\dots,\round$, $\ExtraA\in\alphExtraA$ and $\ExtraB\in\alphExtraB$ where $\ExtraA-\X-\Y-\ExtraB$, such that there exist decoding functions:
  \begin{IEEEeqnarray}{rcl}
    \gaY:\alphX\times\alphExtraA\times\alphU^\round\mapsto\alphhatYa, \label{decoder1}\\
    \gbX:\alphY\times\alphExtraB\times\alphU^\round\mapsto\alphhatXb, \label{decoder2}\\
    \gaX:\alphX\times\alphExtraA\times\alphU^\round\mapsto\alphhatXa, \label{decoder3}\\
    \gbY:\alphY\times\alphExtraB\times\alphU^\round\mapsto\alphhatYb, \label{decoder4}
  \end{IEEEeqnarray}
  satisfying
  \begin{IEEEeqnarray}{l}
    E\da(\Za,\hatYa)\leq\Da, \label{Adistortion-thm}\\
    E\dab(\hatYa,\hatYb)\leq\Dab, \label{Areconstruction-thm}\\
    E\db(\Zb,\hatXb)\leq\Db, \label{Bdistortion-thm}\\
    E\dba(\hatXb,\hatXa)\leq\Dba, \label{Breconstruction-thm}
  \end{IEEEeqnarray}
  with $\Zj=\fj$, $\indx=A,B$, $\hatYa=\gaY(\X,\ExtraA,\U^\round)$, $\hatXb=\gbX(\Y,\ExtraB,\U^\round)$, $\hatXa=\gaX(\X,\ExtraA,\U^\round)$, and $\hatYb=\gbY(\Y,\ExtraB,\U^\round)$.
 Cardinality of the auxiliary random variable alphabets satisfy \begin{math}|\alphExtraA|\leq 2\end{math}, \begin{math}|\alphExtraB|\leq 2\end{math}, and \begin{math}|\alphU_\msgindex|\leq|\alphX||\alphExtraA|\prod_{\blockindex=1}^{\msgindex-1}|\alphU_\blockindex|+\round-\msgindex+5\end{math} for $\msgindex$ odd, and \begin{math}|\alphU_\msgindex|\leq|\alphY||\alphExtraB|\prod_{\blockindex=1}^{\msgindex-1}|\alphU_\blockindex|+\round-\msgindex+5\end{math} for $\msgindex$ even.

\end{theorem}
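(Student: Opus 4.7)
The plan is to prove Theorem~\ref{main1} by establishing achievability and the converse separately, adapting the interactive single-letterization of \cite{MaIshwarSomeResults} to handle the private randomizers $\ExtraA,\ExtraB$ and the extra reconstruction decoders (\ref{decoder3})--(\ref{decoder4}). For achievability I would fix a joint distribution of the form $p(\ExtraA|\X)\,p(\X,\Y)\,p(\ExtraB|\Y)$ together with conditionals $p(\U_\msgindex|\U^{\msgindex-1},\X,\ExtraA)$ for odd $\msgindex$ and $p(\U_\msgindex|\U^{\msgindex-1},\Y,\ExtraB)$ for even $\msgindex$, all consistent with the Markov chains stated in the theorem. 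The sequences $\ExtraA^n$ and $\ExtraB^n$ are generated memorylessly and locally at $\Ta$ and $\Tb$ from $\X^n$ and $\Y^n$ and require no communication, since $\ExtraA-\X-\Y-\ExtraB$ makes them conditionally independent of the other terminal's source given the local source. In round $\msgindex$ (say odd), $\Ta$ builds a codebook of $\U_\msgindex^n$-sequences conditionally typical given $\U^{\msgindex-1,n}$, partitioned into $2^{n\rate_\msgindex}$ bins, performs joint-typicality encoding using $(\X^n,\ExtraA^n,\U^{\msgindex-1,n})$, and transmits the bin index; $\Tb$ recovers $\U_\msgindex^n$ from the bin index using $(\Y^n,\U^{\msgindex-1,n})$ as side information. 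Standard Wyner--Ziv binning then yields (\ref{odd rate}), and the symmetric argument yields (\ref{even rate}). After $\round$ rounds both terminals share $\U^{\round,n}$ and each applies its four symbol-wise decoders to its own source symbol, its private randomizer, and $\U^\round$; a standard typical-set distortion computation establishes (\ref{Adistortion})--(\ref{Breconstruction}) with vanishing slack.

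For the converse, given any $(n,\rate_1,\ldots,\rate_\round,\Da,\Db,\Dab,\Dba)$-code, I would single-letterize each rate inequality starting from $n\rate_\msgindex\ge I(\X^n;\msg_\msgindex|\msg^{\msgindex-1},\Y^n)$ for odd $\msgindex$ (dually for even $\msgindex$), expand by the chain rule, and identify the round-$\msgindex$ single-letter auxiliary as $\U_{\msgindex,\blockindex}=(\msg_\msgindex,\U^{\msgindex-1}_\blockindex,\X^{\blockindex-1},\Y_{\blockindex+1}^n)$, so that each summand becomes $I(\X_\blockindex;\U_{\msgindex,\blockindex}|\Y_\blockindex,\U^{\msgindex-1}_\blockindex)$. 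Introducing a uniform time-sharing index and absorbing it into the auxiliaries converts the per-letter sum into the single-letter bound (\ref{odd rate}). The private randomizers $\ExtraA,\ExtraB$ are chosen to carry those local past/future quantities that the operational decoders at $\Ta$ and $\Tb$ require but that are not already embedded in the $\U_\msgindex$'s, selected so that $\ExtraA-\X-\Y-\ExtraB$ holds at the single-letter level. With this choice, the four operational reconstructions become symbol-wise functions of $(\X,\ExtraA,\U^\round)$ or $(\Y,\ExtraB,\U^\round)$, and the block-averaged distortions (\ref{Adistortion})--(\ref{Breconstruction}) reduce to the single-letter constraints (\ref{Adistortion-thm})--(\ref{Breconstruction-thm}).

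The main difficulty is the simultaneous preservation in the converse of (i) the interactive Markov chain $\U_\msgindex-(\X,\ExtraA,\U^{\msgindex-1})-\Y$ and its dual, (ii) the symbol-wise form of all four decoders, and (iii) the overall source-side chain $\ExtraA-\X-\Y-\ExtraB$. Since past and future of both sources operationally leak into each terminal's reconstructions, one must partition carefully which tail variables enter $\U_\msgindex$ versus $\ExtraA$ or $\ExtraB$: loading too much $\Y$-side context into $\ExtraA$ breaks $\ExtraA-\X-\Y$, while loading too little prevents the decoders from being symbol-wise. Once this partition is fixed, the cardinality bounds follow by sequential applications of the Fenchel--Eggleston--Carath\'eodory theorem: first $|\alphExtraA|,|\alphExtraB|\le 2$ by preserving the joint distribution of $(\X,\Y)$ together with the four distortion expectations, and then each $|\alphU_\msgindex|$ by preserving the mutual informations in (\ref{odd rate})--(\ref{even rate}) together with the six distortions, yielding the product-plus-constant cardinalities stated in the theorem.
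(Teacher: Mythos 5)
Your outline matches the paper's skeleton in the converse (single-letterize via $n(\rate_\msgindex+\epsilon)\geq I(\X^n;\msg_\msgindex|\msg^{\msgindex-1},\Y^n)$, then bundle past/future of the sources into the auxiliaries), and your achievability sketch matches the random-coding-and-binning argument the paper only gestures at. The paper's concrete identifications, which you describe only qualitatively, are $\U_{1,\blockindex}=(\msg_1,\X^{\blockindex-1},\Y_{\blockindex+1}^n)$, $\U_\msgindex=\msg_\msgindex$ for $\msgindex\geq 2$, $\ExtraA_{,\blockindex}=\X_{\blockindex+1}^n$, and $\ExtraB_{,\blockindex}=\Y^{\blockindex-1}$; once written out, the i.i.d. structure immediately gives $\ExtraA_{,\blockindex}-\X_\blockindex-\Y_\blockindex-\ExtraB_{,\blockindex}$ and the required encoder Markov chains, and the operational block decoders become symbol-wise functions of $(\X_\blockindex,\ExtraA_{,\blockindex},\U_{1,\blockindex},\U_2^\round)$ or $(\Y_\blockindex,\ExtraB_{,\blockindex},\U_{1,\blockindex},\U_2^\round)$. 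That part of your plan is sound.

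The step that has a genuine gap is ``introducing a uniform time-sharing index and absorbing it into the auxiliaries.'' With this particular auxiliary structure, the time index $Q$ cannot be absorbed into the randomizers without destroying $\ExtraA-\X-\Y-\ExtraB$. If $Q$ is put into both $\ExtraA$ and $\ExtraB$, the two copies of $Q$ must agree, so $p(\ExtraA,\X,\Y,\ExtraB)$ is supported on a measure-zero diagonal set that cannot factor as $p(\X,\Y)p(\ExtraA|\X)p(\ExtraB|\Y)$. If $Q$ is dropped from both and we only keep $(\ExtraA_{,Q},\ExtraB_{,Q})$, the resulting mixture over $Q$ fails the chain because $\sum_q p(q)\,p(s_a|x,q)\,p(s_b|y,q)$ does not factor as $\bigl(\sum_q p(q)p(s_a|x,q)\bigr)\bigl(\sum_{q'}p(q')p(s_b|y,q')\bigr)$; moreover, in this problem $\ExtraA_{,q}$ and $\ExtraB_{,q}$ have $q$-dependent alphabets, so $Q$ is implicitly recoverable from either one. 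You flag ``which tail variables go where'' as the main difficulty, but the real obstruction is specifically the time-sharing variable, not the tails. The paper avoids this by never introducing $Q$: it instead invokes a convexity-and-monotonicity lemma for $\ratereg(\Da,\Db,\Dab,\Dba)$ in the distortion vector and applies it to the per-letter rate tuples, so that $\frac{1}{n}\sum_\blockindex\rate_{\msgindex,\blockindex}$ lands in $\ratereg_\msgindex$ of the averaged distortions and then, by monotonicity, in $\ratereg_\msgindex(\Da,\Db,\Dab,\Dba)$. Establishing that convexity requires some care (the mixing bit must be placed in $\U_1$ rather than in $\ExtraA$ or $\ExtraB$, with $\ExtraA,\ExtraB$ carrying both candidate randomizers), but it is cleanly separated from the block-to-single-letter reduction. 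To repair your proposal you would either need to adopt this convexity/monotonicity route, or explicitly place $Q$ only inside $\U_1$ and argue that the decoders, which see $\U^\round$, can extract $Q$ there while $\ExtraA,\ExtraB$ remain $Q$-free --- and then check all the Markov chains again, which is exactly the content of the paper's lemma.

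Your treatment of the cardinality bounds (sequential Carath\'eodory, first on $\ExtraA,\ExtraB$ preserving $p(\X,\Y)$ plus the distortion functionals, then on each $\U_\msgindex$) is consistent with the paper's citation of the Lapidoth--Malar--Wigger technique and the Ma--Ishwar argument; the paper does not spell it out either.
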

\begin{proof}
  The proof is provided in Appendix \ref{main1proof}.
\end{proof}

\begin{remark}
The region for sequential transmission starting from $\Tb$ or for simultaneous communication can be obtained through simple modifications of Theorem \ref{main1}.  Note that in general,  the constrained-reconstruction rate-distortion region depends on the order of communication. In particular for simultaneous communication, the rate constraints and the Markov conditions in (\ref{odd rate}) and (\ref{even rate}) would have to take into account that in each round the terminals transmit at the same time.
\end{remark}

\begin{remark}
  The auxiliary random variables $\ExtraA$ and $\ExtraB$ are defined by extending \cite{Lapidoth2013}, where a single round of communication with multiple distortion/reconstruction constraints is studied. In \cite{Lapidoth2013}, an auxiliary random variable is needed only in the reconstruction at the encoder. Due to the symmetry of our problem, and since both terminals wish to obtain reconstructions of functions estimated at the other terminal, we need to use these auxiliary random variables in all decoding functions.
\end{remark}

\begin{remark}
  Note that when the reconstruction constraints (\ref{Areconstruction-thm}) and (\ref{Breconstruction-thm}) are removed, cardinality arguments similar to \cite{Lapidoth2013} suggest that $|\alphExtraA|=|\alphExtraB|=1$. Therefore, the rate-distortion region in Theorem \ref{main1} coincides with that of \cite{MaIshwarSomeResults}.
\end{remark}

\begin{remark}
  Theorem \ref{main1} can easily be extended to include multiple functions of the sources. The cardinality of $\ExtraA$ and $\ExtraB$ are upper bounded by the total number of desired functions at both ends.
\end{remark}

\begin{corollary}\label{connection-corollary}
  For discrete memoryless sources $(\X,\Y)\sim p(\x,\y)$, and for sequential communication starting from $\Ta$, the common-reconstruction rate-distortion region $\ratereg_{CR}(\Da,\Db)$ for computing functions $\fA$ and $\fB$ is given by
  \begin{IEEEeqnarray*}{l}
     \ratereg_{CR}(\Da,\Db)=\ratereg(\Da,\Db,0,0).
  \end{IEEEeqnarray*}
  Furthermore, to compute $\ratereg_{CR}(\Da,\Db)$ directly for the nontrivial case where $\X\not=\Y$, we can use the region in Theorem \ref{main1} by removing constraints (\ref{Areconstruction-thm}) and (\ref{Breconstruction-thm}), while having the decoding functions depend only on $\U^\round$; i.e. $\hatYa=\hatYb=\gaY(\U^\round)$, $\hatXb=\hatXa=\gbX(\U^\round)$, with $|\alphExtraA|=|\alphExtraB|=1$.
\end{corollary}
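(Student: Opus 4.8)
The plan is to prove the two assertions in turn, taking $\dab$ and $\dba$ to be Hamming distortions, so that a single-letter expected reconstruction distortion equals the corresponding symbol-disagreement probability.

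\emph{Step 1: $\ratereg_{CR}(\Da,\Db)=\ratereg(\Da,\Db,0,0)$.} For ``$\subseteq$'' I would take any common-reconstruction code and note that the normalized Hamming distance between $\hatYa^n$ and $\hatYb^n$ is pointwise at most $\mathbf 1\{\hatYa^n\neq\hatYb^n\}$; taking expectations, (\ref{A-CR}) gives $\frac1n\sum_\blockindex E\dab(\hatYa{}_{,\blockindex},\hatYb{}_{,\blockindex})\leq\Pr(\hatYa^n\neq\hatYb^n)\leq\epn$ and (\ref{B-CR}) gives $\frac1n\sum_\blockindex E\dba(\hatXb{}_{,\blockindex},\hatXa{}_{,\blockindex})\leq\epn$. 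Since $\epn\to0$, for every $\epsilon>0$ and $n$ large this is an $(n,\rate_1,\dots,\rate_\round,\Da,\Db,\epsilon,\epsilon)$-code in the sense of (\ref{Adistortion})--(\ref{Breconstruction}), so the rate vector lies in $\ratereg(\Da,\Db,0,0)$. For ``$\supseteq$'' I would not reuse the given codes (a per-letter reconstruction distortion $\epsilon$ need not make the block error small); instead I invoke Theorem~\ref{main1} at $\Dab=\Dba=0$, which yields auxiliaries $\U^\round,\ExtraA,\ExtraB$ and decoders with $E\dab(\hatYa,\hatYb)\leq0$ and $E\dba(\hatXb,\hatXa)\leq0$, i.e.\ $\gaY(\X,\ExtraA,\U^\round)=\gbY(\Y,\ExtraB,\U^\round)$ and $\gbX(\Y,\ExtraB,\U^\round)=\gaX(\X,\ExtraA,\U^\round)$ with probability one under $p$. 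Feeding these into the achievability construction of Appendix~\ref{main1proof}, both terminals decode the same auxiliary codewords and the realized tuple $(\X^n,\ExtraA^n,\Y^n,\ExtraB^n,\U^{\round,n})$ is jointly typical with probability tending to $1$; on that event every coordinate lies in the support of $p$, where the two decoding maps coincide, so $\hatYa^n=\hatYb^n$ and $\hatXa^n=\hatXb^n$ coordinatewise. Hence $\Pr(\hatYa^n\neq\hatYb^n)\to0$ and $\Pr(\hatXa^n\neq\hatXb^n)\to0$, which are (\ref{A-CR})--(\ref{B-CR}), while the rates and distortions are those guaranteed by Theorem~\ref{main1}; so the rate vector lies in $\ratereg_{CR}(\Da,\Db)$.

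\emph{Step 2: the direct characterization for $\X\neq\Y$.} Let $\mathcal{A}$ be the region obtained from Theorem~\ref{main1} by setting $|\alphExtraA|=|\alphExtraB|=1$, deleting (\ref{Areconstruction-thm})--(\ref{Breconstruction-thm}), and requiring $\hatYa=\hatYb=\gaY(\U^\round)$ and $\hatXb=\hatXa=\gbX(\U^\round)$. That $\mathcal{A}\subseteq\ratereg_{CR}(\Da,\Db)$ is immediate, since each such choice is a valid point of the Theorem~\ref{main1} region with $\Dab=\Dba=0$ (the reconstructions being literally identical, so the reconstruction constraints hold with value $0$), and this region equals $\ratereg(\Da,\Db,0,0)=\ratereg_{CR}(\Da,\Db)$ by Step~1. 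For the reverse inclusion I would start from the Theorem~\ref{main1} description at $\Dab=\Dba=0$ and argue the common reconstruction is already a deterministic function of $\U^\round$: $\gaY(\X,\ExtraA,\U^\round)=\gbY(\Y,\ExtraB,\U^\round)$ a.s.\ forces this value, conditioned on each realization of $\U^\round$, to be simultaneously a function of $(\X,\ExtraA)$ and of $(\Y,\ExtraB)$; since $\ExtraA-\X-\Y-\ExtraB$ gives $\ExtraA\perp\ExtraB\mid(\X,\Y)$ and $\ExtraA\perp\Y\mid\X$, and these conditional independences persist after further conditioning on $\U^\round$ because each message depends only on one source and the message history, such a function must reduce to one of $\X$ alone and of $\Y$ alone, hence of the common part of $(\X,\Y)$. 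Absorbing any common part of $(\X,\Y)$ into $\U_1$ beforehand is free (being a function of $\Y$ it changes neither $I(\X;\U_1\mid\Y,\U^0)$ nor the Markov constraint in (\ref{odd rate})), so the reconstruction becomes a function of $\U^\round$ alone; the same reasoning handles $\gbX(\Y,\ExtraB,\U^\round)=\gaX(\X,\ExtraA,\U^\round)$. Then $\ExtraA,\ExtraB$ appear in no decoder, and marginalizing them out leaves the law of $(\X,\Y,\U^\round)$---hence the rate terms in (\ref{odd rate})--(\ref{even rate})---unchanged while collapsing the Markov conditions to $\U_\msgindex-(\X,\U^{\msgindex-1})-\Y$ for odd $\msgindex$ and $\U_\msgindex-(\Y,\U^{\msgindex-1})-\X$ for even $\msgindex$, by a short round-by-round induction. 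This exhibits the rate vector in $\mathcal{A}$.

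The typicality bookkeeping in the achievability invocation and the induction over rounds are routine. \textbf{The main obstacle} lies in the two points where an ``expected single-letter distortion equal to zero'' must be upgraded to an exact structural statement: in Step~1 (direction ``$\supseteq$''), coordinatewise agreement of the decoded sequences---which works because, with a Hamming distortion, $\Dab=0$ makes the single-letter decoders agree on the whole support of $p$, and typicality carries this to the block level; and in Step~2, $\sigma(\U^\round)$-measurability of the common reconstruction, where one must show that, once any common part of $(\X,\Y)$ has been folded into the messages, no further shared structure between the two terminals' observations remains for the reconstruction to depend on. The hypothesis $\X\neq\Y$ excludes the degenerate case in which both terminals already hold the same observation, where the region collapses to $\{\rate_\msgindex\geq0\}$ and the statement is trivial.
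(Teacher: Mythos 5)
Your proof is essentially correct, but for the set equality it takes a genuinely different route from the paper. The paper disposes of $\ratereg_{CR}(\Da,\Db)=\ratereg(\Da,\Db,0,0)$ in one line by citing the general equivalence of zero-distortion and lossless reconstruction due to Gu and to Jalali--Effros; you instead reconstruct the substance of that cited result: the easy direction via the pointwise bound of Hamming distortion by the block-error indicator, and the hard direction by extracting a single-letter representation with $E\dab=E\dba=0$ from the converse of Theorem~\ref{main1}, observing that zero expected Hamming distortion forces the two single-letter decoders to agree on the entire support of the joint law, and then using strong typicality in the achievability scheme to lift this agreement to all $n$ coordinates simultaneously. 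This is sound (it is exactly how the cited equivalence is proved) and has the merit of being self-contained, at the cost of leaning on details of the achievability construction that the paper omits. For the direct characterization, you and the paper follow the same idea --- a.s.\ equality of $\gaY(\X,\ExtraA,\U^\round)$ and $\gbY(\Y,\ExtraB,\U^\round)$ forces both to equal a function of $\U^\round$ --- but where the paper simply asserts the existence of $h_A(\U^\round)$, you correctly identify the G\'acs--K\"orner obstruction and neutralize the unconditional common part of $(\X,\Y)$ by folding it into $\U_1$ at zero rate cost, which is more honest than the paper's bare appeal to ``$\X\neq\Y$''.

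One loose end remains in your Step~2: conditioning on $\U^\round=u$ multiplies $p(x,y)$ by a product $\tilde\alpha(x,u)\tilde\beta(y,u)$, and the resulting support graph can disconnect even when the support graph of $p(\cdot,\cdot\,|\,w)$ given the common part is connected; so the common reconstruction is a priori only constant on each connected component given $u$, and the component index need not be a function of $\U^\round$ after your $\U_1$-absorption. The fix is the same absorption trick applied one more time: the common value $h$ is simultaneously a function of $(\X,\U^\round)$ and of $(\Y,\U^\round)$ (this follows from the conditional independence $\ExtraA\perp(\Y,\ExtraB)\mid(\X,\U^\round)$ that you invoke), so appending $h$ to the final message $\U_\round$ preserves the Markov constraint in (\ref{even rate}) and adds rate $I(\Y;h\mid\X,\U^\round)=H(h\mid\X,\U^\round)=0$. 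With that amendment your argument is complete --- and strictly more rigorous than the paper's own.
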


\begin{proof}
  Gu \cite{Gu-diss}, and Jalali and Effros \cite{jalali2011} showed that in any network with any set of bounded distortion metrics, a rate is achievable for zero distortion between two of the variables, iff it is achievable with zero probability of error among the same variables. In other words, lossless reconstruction is achievable if and only if zero-distortion reconstruction is achievable. This includes our setting as a special case, where common reconstruction constraints become equivalent to constrained reconstruction with $\Dab=\Dba=0$.

  To compute $\ratereg_{CR}(\Da,\Db)$ directly, consider distortion metrics $\dr(.,.)$, $\indx,\indxx =A,B$, $\indx\not=\indxx,$  such that $\dr(\x,\y)=0$ iff $\x=\y$. Then (\ref{Areconstruction-thm}) and (\ref{Breconstruction-thm}) with $\Dab=\Dba=0$ imply that $\gaY(\X,\ExtraA,\U^\round)=\gbY(\Y,\ExtraB,\U^\round)$, with probability 1 and $\gbX(\Y,\ExtraB,\U^\round)=\gaX(\X,\ExtraA,\U^\round)$, with probability 1. Hence, we can argue that there exist functions $h_A(\U^\round)$ and $h_B(\U^\round)$ such that $\gaY(\X,\ExtraA,\U^\round)=\gbY(\Y,\ExtraB,\U^\round)=h_A(\U^\round)$ with probability 1 and $\gbX(\Y,\ExtraB,\U^\round)=\gaX(\X,\ExtraA,\U^\round)=h_B(\U^\round)$ with probability 1. Therefore, without loss of generality the decoders only depend on $\U^\round$. This also suggests that auxiliary random variables $\ExtraA$ and $\ExtraB$ are not necessary and we can set $|\alphExtraA|=|\alphExtraB|=1$.
\end{proof}

\section{Gaussian Sources}

  In this section, we illustrate the common reconstruction rate-distortion region, $\ratereg_{CR}(\Da,\Db)$, for jointly Gaussian sources. We set $d(a,b)=(a-b)^2$ for the distortion functions in (\ref{Adistortion}) and (\ref{Bdistortion}).
We first consider linear functions of $\X$ and $\Y$. We then study extensions to other functions.
\begin{theorem}\label{gaussian1}
  Let $\X$ and $\Y$ be i.i.d. jointly Gaussian sources, with $\X\sim\mathcal{N}(0,\sigma_X^2)$ and $\Y=\X+\V$ where $\V\sim\mathcal{N}(0,\sigma_V^2)$ is independent of $\X$. For computing functions $\Za=\fA=\alpha_A\X+\beta_A\Y$ and $\Zb=\fB=\alpha_B\X+\beta_B\Y$, the $\round$-round common-reconstruction rate-distortion region starting from $\Ta$, $\ratereg_{CR}(\Da,\Db)$, is given by:
  \begin{IEEEeqnarray*}{rl}
    \sum_{m\text{ odd}}\rate_{m} \geq&\max_{\indx={A,B}}\bigg\{\frac{1}{2}\log\left(\frac{\alpha_\indx^2\sigma_\X^2\sigma_\V^2}{\D_\indx\sigma_\Y^2-\covyj^2}\right),\\
    &\frac{1}{2}\log \left(\left(\frac{\sigma_\X^2\sigma_\V^2}{\sigma_\Y^2}\right)\frac{\D_\indx+\alpha_\indx^2\sigma_\X^2-2\alpha_\indx\covxj}{\D_\indx\sigma_\X^2-\covxj^2}\right)\bigg\},\\
  \end{IEEEeqnarray*}
  \begin{IEEEeqnarray*}{rl}
    \sum_{m\text{ even}}\rate_{m} \geq&\max_{\indx={A,B}}\bigg\{\frac{1}{2}\log\left(\frac{\beta_\indx^2\sigma_\X^2\sigma_\V^2}{\D_\indx\sigma_\X^2-\covxj^2}\right),\\
    &\frac{1}{2}\log\left(\left(\sigma_\V^2\right)\frac{\D_\indx+\beta_\indx^2\sigma_\Y^2-2\beta_\indx\covyj}{\D_\indx\sigma_\Y^2-\covyj^2}\right)\bigg\},
  \end{IEEEeqnarray*}
  where $|\covxj|\leq\sqrt{\D_\indx\sigma_\X^2}$ and $|\covyj|\leq\sqrt{\D_\indx\sigma_\Y^2}$, $\indx=A,B$.
\end{theorem}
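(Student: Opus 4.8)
The plan is to specialize Corollary \ref{connection-corollary} to the jointly Gaussian setting and then reduce the multi-letter, multi-round optimization to the single-letter Gaussian form stated. By Corollary \ref{connection-corollary}, the common-reconstruction region is obtained from Theorem \ref{main1} by dropping the reconstruction-distortion constraints, setting $|\alphExtraA|=|\alphExtraB|=1$, and forcing all decoders to depend on $\U^\round$ alone. The key structural consequence is that the \emph{same} reconstruction $\hatYa^n=\hatYb^n$ of $\Za^n=\alpha_A\X^n+\beta_A\Y^n$ must be recoverable from $\U^\round$ at both terminals, and likewise $\hatXa^n=\hatXb^n$ for $\Zb^n=\alpha_B\X^n+\beta_B\Y^n$. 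First I would observe that since $\Y=\X+\V$ with $\V\perp\X$, any linear function $\alpha\X+\beta\Y=(\alpha+\beta)\X+\beta\V$, and computing it at $\Ta$ (who sees $\X$) amounts to conveying enough about $\V$; symmetrically at $\Tb$. This lets me split the problem: the odd rounds (from $\Ta$) carry information about $\X$ useful for the $\beta$-components, and the even rounds (from $\Tb$) carry information about $\Y$ (equivalently $\V$) useful for the $\alpha$-components.

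Next I would establish the converse. Using Corollary \ref{connection-corollary}, for odd $\msgindex$ we have $\rate_\msgindex\ge I(\X;\U_\msgindex\mid\Y,\U^{\msgindex-1})$, so $\sum_{\msgindex\text{ odd}}\rate_\msgindex\ge I(\X;\U^\round\mid\Y)$ after telescoping and using the interactive Markov chains; similarly $\sum_{\msgindex\text{ even}}\rate_\msgindex\ge I(\Y;\U^\round\mid\X)$. I then need two lower bounds on each of these mutual informations, one per function $\indx\in\{A,B\}$. Fix $\indx=A$ and consider $I(\X;\U^\round\mid\Y)$. Since $\hatYa=h_A(\U^\round)$ is a common reconstruction of $\Za$ with $\E(\Za-\hatYa)^2\le\Da$, a standard rate-distortion / entropy-power style bound gives $I(\X;\U^\round\mid\Y)\ge h(\Za\mid\Y)-h(\Za\mid\Y,\hatYa)\ge h(\Za\mid\Y)-\tfrac12\log(2\pi e\,\covyj)$ where I would \emph{define} $\covyj$ as (a bound on) the conditional MMSE $\E[(\Za-\E[\Za\mid\Y,\hatYa])^2]$ or the relevant covariance term, and then compute $h(\Za\mid\Y)$ explicitly for the jointly Gaussian pair; combining these with $h(\Za)\le\tfrac12\log(2\pi e\,\Da\sigma_\Y^2/\sigma_\Y^2)$-type manipulations yields the two expressions in the $\max$. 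The appearance of both a ``$\log(\alpha_\indx^2\sigma_\X^2\sigma_\V^2/(\D_\indx\sigma_\Y^2-\covyj^2))$'' term and a ``$\log((\sigma_\X^2\sigma_\V^2/\sigma_\Y^2)(\D_\indx+\alpha_\indx^2\sigma_\X^2-2\alpha_\indx\covxj)/(\D_\indx\sigma_\X^2-\covxj^2))$'' term comes from bounding the \emph{same} conditional mutual information two different ways: once by how well the $\Y$-observer plus messages must reconstruct $\Za$, and once by how well the $\X$-observer's contribution must enable reconstruction — the two auxiliary covariances $\covxj,\covyj$ parametrize the reconstruction correlations, with the stated ranges $|\covxj|\le\sqrt{\D_\indx\sigma_\X^2}$, $|\covyj|\le\sqrt{\D_\indx\sigma_\Y^2}$ being exactly the Cauchy–Schwarz constraints forced by $\E(\Za-\hatYa)^2\le\Da$. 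I would run the same argument with roles of $\X$ and $\Y$ (odd/even) swapped to get the ``even'' bound.

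For achievability I would exhibit a Gaussian test-channel scheme: in the odd rounds $\Ta$ sends a Wyner–Ziv-style quantized description $\hat\X$ of $\X$ (Gaussian, with the binning against side information $\Y$ at $\Tb$), and in the even rounds $\Tb$ sends a Gaussian description $\hat\Y$; the common reconstructions are then deterministic affine functions of $(\hat\X,\hat\Y)$, so that the common-reconstruction requirement holds with probability one under the appropriate quantization-error idealization, and then one passes to the distortion formulation via Corollary \ref{connection-corollary}. Matching the rate expression reduces to choosing the quantization noise levels so that $\covxj,\covyj$ equal the covariances between $\Za$ (resp. $\Zb$) and its reconstruction; the $\max$ over $\indx$ appears because a single pair of descriptions must simultaneously serve both function-reconstruction tasks. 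I expect the \textbf{main obstacle} to be the converse bookkeeping: carefully choosing the single-letter auxiliary covariances $\covxj,\covyj$ and showing that the two distinct lower bounds on $I(\X;\U^\round\mid\Y)$ (and on $I(\Y;\U^\round\mid\X)$) are simultaneously valid for every admissible code — in particular verifying that the common-reconstruction structure (decoders depending only on $\U^\round$) is what forces the ``weaker'' second bound, exactly as in Steinberg's and Lapidoth et al.'s single-round results, rather than the stronger Wyner–Ziv rate. A secondary technical point is handling the sum-rate splitting across rounds and confirming that no interaction beyond one round in each direction helps for these linear functions, which should follow because the optimal test channels are Gaussian and the relevant mutual informations telescope cleanly.
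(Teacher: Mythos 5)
Your proposal follows essentially the same route as the paper: Corollary \ref{connection-corollary} plus telescoping gives the directional sum-rate bounds $I(\X;\U^\round|\Y)$ and $I(\Y;\U^\round|\X)$, each of which is lower-bounded in two ways per function $\indx$ via the Gaussian maximum-entropy bound, and achievability uses two rounds of jointly Gaussian additive descriptions $\U_1=\X+\widetilde{\U}_1$, $\U_2=\Y+\widetilde{\U}_2$ with linear decoding $\hatZj=\alpha_\indx\U_1+\beta_\indx\U_2$. One small correction to your sketch: $\covxj$ and $\covyj$ are the covariances of the error $\Zj-\hatZj$ with $\X$ and $\Y$ respectively (not conditional MMSEs), and the second term in each maximum is obtained by writing $h(\alpha_\indx\X|\Y,\U^\round)=h(\Zj-\hatZj|\beta_\indx\Y-\hatZj,\U^\round)$ and then weakening the conditioning to $\Zj-\hatZj-\alpha_\indx\X$, which is the precise form of the step you left vague.
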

\begin{proof}
  We first find a lower bound on sum-rates in each direction using Corollary \ref{connection-corollary}. From $\Ta$ to $\Tb$ we have:
  \begin{IEEEeqnarray}{rCl}
    \sum_{m\text{ odd}}&\rate_{m}&\geq \sum_{\msgindex\text{ odd}}I(\X;\U_\msgindex|\Y,\U^{\msgindex-1})\nonumber\\
       &=&I(\X,\U^\round|\Y)=h(\X|\Y)-h(\X|\Y,\U^t)\nonumber\\
       &=&h(\X|\Y)+\max_{\indx=A,B}[\log|\alpha_\indx|-h(\alpha_\indx\X|\Y,\U^t)])\label{coefficient}\\
       &=&h(\X|\Y)+\max_{\indx=A,B}\bigg\{\log|\alpha_\indx|-h(\Zj-\hatZj|\Y,\U^\round),\nonumber\\
       &&\verb"     "\log|\alpha_\indx|-h(\Zj-\hatZj|\beta_\indx\Y-\hatZj,\U^\round)\bigg\} \label{fn of U's 2}\\
       &\geq&h(\X|\Y)+\max_{\indx=A,B}\bigg\{\log|\alpha_\indx|-h(\Zj-\hatZj|\Y),\nonumber\\
       &&\verb"     "\log|\alpha_\indx|-h(\Zj-\hatZj|\Zj-\hatZj-\alpha_\indx\X)\bigg\}\nonumber\\
       &\geq&\max_{\indx={A,B}}\bigg\{\frac{1}{2}\log \left(\frac{\sigma_\X^2\sigma_\V^2}{\sigma_\X^2+\sigma_\V^2}\frac{\D_\indx+\alpha_\indx^2\sigma_\X^2-2\alpha_\indx\covxj}{\D_\indx\sigma_\X^2-\covxj^2}\right),\nonumber\\
       &&\frac{1}{2}\log\left(\frac{\alpha_\indx^2\sigma_\X^2\sigma_\V^2}{\D_\indx(\sigma_\X^2+\sigma_\V^2)-\covyj^2}\right)\bigg\} \label{lapidoth}
  \end{IEEEeqnarray}
  where (\ref{coefficient}) holds because $h(\alpha\X)=\log|\alpha| h(\X)$, and (\ref{fn of U's 2}) holds because $\hatZj$ is a function of $\U^\round$. Finally, (\ref{lapidoth}) follows by upper bounding the conditional entropies using the fact that for any given covariance matrix, Gaussian random variables maximize the entropy. Similarly, we can find the sum-rate from $\Tb$ to $\Ta$.

  We can achieve any rate in this region by two rounds of communication using the $(\U_1,\U_2)$-pair $\U_1=\X+\widetilde{\U}_1$ and $\U_2=\Y+\widetilde{\U}_2$, where $\widetilde{\U}_1 - \X - \Y - \widetilde{\U}_2$ are all jointly Gaussian. Decoding is done via $\hatZj=\hatWk=\alpha_\indx\U_1+\beta_\indx\U_2$, $\indx,\indxx=A,B$, $\indx\not=\indxx$. Using these variables in Corollary \ref{connection-corollary} and letting $\covxj$ be $cov\big((\Zj-\hatZj),\X\big)$ and $\covyj$ be $cov\big((\Zj-\hatZj),\Y\big)$, we obtain the desired achievable region.
\end{proof}

\begin{remark} \label{2-rounds}
  The achievability proof of Theorem \ref{gaussian1} directly applies for simultaneous transmission as well. Hence for computing linear functions of Gaussian sources the order of communication is not important and one can achieve the optimal performance by two simultaneous rounds of communication. However, this is not true for arbitrary functions of Gaussian sources. Consider the functions $\fA=\fB\triangleq \mathbb{I}(\X\geq\Y)$ again for $(\X,\Y)$ jointly Gaussian and with common reconstruction constraint, where $\mathbb{I}(.)$ is the indicator function. With sequential communication starting from $\Ta$, by letting $\rate_1$ to grow unboundedly, the rate pair $(\rate_1,\rate_2)=(\infty,1)$ will achieve arbitrary small $\Da$ and $\Db$. However, the same rate pair will result in strictly positive distortions using simultaneous communication.
\end{remark}

 \begin{figure}[htbp]
   \centering
   \includegraphics[trim=5cm 9cm 5cm 9cm, width=0.25\textwidth]{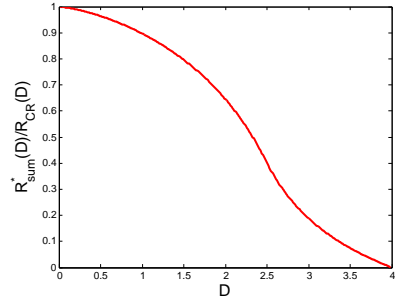}
   \caption{Ratio of the interactive sum rate $\rate^*_{sum}(D)$, to one-way rate $\rate_{CR}(D)$ for computing $\fA=c$ and $\fB=\X$ for $\sigma_\X^2=4$ and $\sigma_\V^2=4$.}
   \label{fig:comparesteinberg}
 \end{figure}

  To illustrate the significance of interaction in achieving the desired common-reconstruction rate-distortion bounds, we compare it with the non-interactive setting in \cite{Steinberg2008}. Steinberg found the common reconstruction rate-distortion region for estimating $\X$ and when only one-way communication from $\Ta$ to $\Tb$ is allowed as:
\begin{IEEEeqnarray*}{l}
    \rate_{CR}(D)=\frac{1}{2}\log\left(\frac{\sigma_\X^2}{\sigma_X^2+\sigma_V^2}.\frac{D+\sigma_V^2}{D}\right).
\end{IEEEeqnarray*}
When the terminals are allowed to interact, the required sum-rate $\rate_{sum}(D)$ to achieve the same distortion can be found by setting $\fA=c$ and $\fB=\X$, with c being a constant, and by minimizing $\rate_{sum}(D)=\sum_{m\text{ odd}}\rate_{m}+\sum_{m\text{ even}}\rate_{m}$ of Theorem \ref{gaussian1}. Note that by the proof of Theorem \ref{gaussian1} and by Remark \ref{2-rounds}, having $\round=2$ rounds of interaction is sufficient. Let $\rate^*_{sum}(D)=\min \rate_{sum}(D)$. One can easily verify that:
\begin{IEEEeqnarray*}{l}
    \frac{\rate^*_{sum}(D)}{\rate_{CR}(D)}\leq 1
\end{IEEEeqnarray*}
As Figure \ref{fig:comparesteinberg} illustrates for $\sigma_\X^2=4$ and $\sigma_\V^2=4$, it can be shown that this ratio can get arbitrarily close to zero as $D$ increases, which suggests that interaction can significantly outperform one-way communication in the presence of common reconstruction constraints. This is because the common reconstruction constraint limits the use of the side information. In the achievable scheme of \cite{Steinberg2008}, side information at $\Tb$ enables binning of the compression indices, thereby reducing the required rate, but cannot be used in the decoding function to further improve the quality of the estimate. Therefore, as $\D$ increases, the dependence of the performance on side information decreases. However, if some small rate is allowed for $\Tb$ to communicate with $\Ta$ as in the interactive case, some limited amount of side information would be available at both $\Ta$ and $\Tb$, improving the performance significantly.

\section{Conclusion}

In this paper we have studied the interactive function computation with reconstruction constraints and characterized the constrained and common reconstruction rate-distortion regions. We have also evaluated the common-reconstruction rate-distortion region for linear functions of Gaussian sources. An interesting observation we have made is that two simultaneous rounds of communication are optimal in this case. A comparison with one-way communication has shown that a significant performance improvement is possible when interaction is allowed.

\appendices
\section{Proof of Theorem \ref{main1}}\label{main1proof}

The achievability proof is based on random coding and binning as in \cite{NIT} and is omitted.
We next provide a proof for the converse. The following lemma will be used without proof.
\begin{lemma}\label{convexity}
  \emph{[Convexity and monotonicity of rate-region]}
\begin{enumerate}
  \item $\ratereg(\Da,\Db,\Dab,\Dba)$ is convex in distortion vectors $(\Da,\Db,\Dab,\Dba)$.
  \item If $\D_\indx\leq\D'_\indx$ for $\indx\in\{A,B,AB,BA\}$ then
\end{enumerate}
\begin{IEEEeqnarray*}{l}
     \ratereg(\Da,\Db,\Dab,\Dba)\subseteq\ratereg(\Da',\Db',\Dab',\Dba').
  \end{IEEEeqnarray*}
\end{lemma}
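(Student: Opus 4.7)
My plan is to handle part (2) by direct inspection of the definition of an $(n,\ldots)$-code, and part (1) by a standard time-sharing (code-concatenation) argument, taking some care to verify that the concatenated code respects the interactive encoding structure. Monotonicity is essentially immediate: the constraints (\ref{Adistortion})--(\ref{Breconstruction}) defining an $(n,\rate_1,\dots,\rate_\round,\Da,\Db,\Dab,\Dba)$-code are all one-sided inequalities, so any such code is automatically an $(n,\rate_1,\dots,\rate_\round,\Da',\Db',\Dab',\Dba')$-code whenever $\D_\indx\le\D_\indx'$ for every $\indx\in\{A,B,AB,BA\}$. Feeding this through the achievability definition then yields $\ratereg(\Da,\Db,\Dab,\Dba)\subseteq\ratereg(\Da',\Db',\Dab',\Dba')$.

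For part (1), I would fix distortion vectors $\mathbf{D}=(\Da,\Db,\Dab,\Dba)$ and $\mathbf{D}'=(\Da',\Db',\Dab',\Dba')$, a weight $\lambda\in(0,1)$, and achievable rate tuples $\mathbf{R}\in\ratereg(\mathbf{D})$ and $\mathbf{R}'\in\ratereg(\mathbf{D}')$. For each $\epsilon>0$ I would pick block lengths $n_1,n_2$ large, with $n_1/(n_1+n_2)$ within $\epsilon$ of $\lambda$, for which codes realizing $\mathbf{R}$ and $\mathbf{R}'$ with slack $\epsilon$ in every rate and every distortion exist. I would then concatenate them: apply the first code to the initial $n_1$ source samples and the second to the remaining $n_2$, and in each round $\msgindex$ send the pair $(\msg_\msgindex^{(1)},\msg_\msgindex^{(2)})$ produced by the two sub-codes. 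The resulting block-length $n_1+n_2$ code has round-$\msgindex$ rate $(n_1\rate_\msgindex+n_2\rate_\msgindex')/(n_1+n_2)$ and per-letter expected distortions $(n_1\D_\indx+n_2\D_\indx')/(n_1+n_2)$ for $\indx\in\{A,B,AB,BA\}$. Sending $\epsilon\to 0$ shows that $\lambda\mathbf{R}+(1-\lambda)\mathbf{R}'\in\ratereg(\lambda\mathbf{D}+(1-\lambda)\mathbf{D}')$, which is the desired convexity inclusion.

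The main (and essentially the only) delicate point is checking that the concatenation is a legitimate interactive code in the sense of the encoders $\encod^{(n)}_\msgindex$ defined in Section II. This holds because the two sub-codes operate on disjoint sub-blocks of $\X^n$ and $\Y^n$: for odd $\msgindex$, terminal $\Ta$ can form $\msg_\msgindex^{(1)}$ from the first $n_1$ samples of $\X^n$ together with the first sub-code's earlier messages, and $\msg_\msgindex^{(2)}$ from the remaining $n_2$ samples together with the second sub-code's earlier messages, all of which are locally available. The symmetric statement handles even rounds at $\Tb$, so the causal interactive dependency structure is preserved and no deeper argument is needed.
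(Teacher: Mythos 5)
The paper invokes this lemma ``without proof,'' so there is no in-paper argument to compare against; your task is simply to supply the standard proof, and you do so correctly. Part~(2) is exactly as you say: the defining conditions (\ref{Adistortion})--(\ref{Breconstruction}) are upper bounds, so any $(n,\rate_1,\dots,\rate_\round,\Da,\Db,\Dab,\Dba)$-code is trivially an $(n,\rate_1,\dots,\rate_\round,\Da',\Db',\Dab',\Dba')$-code when $\D_\indx\le\D_\indx'$ for all $\indx$, and plugging this through the ``for all $\epsilon$ and $n$ large'' quantifiers in the achievability definition gives the inclusion of regions. Part~(1) is the time-sharing argument, and you correctly identify and dispatch the one point that actually requires checking here, namely that concatenating two interactive codes on disjoint sub-blocks again respects the causal message structure: each terminal holds its entire source block locally and has seen all past messages from both sub-codes, so it can run both sub-encoders in parallel and send the pair of messages in each round. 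The resulting rates and per-letter expected distortions are the $n_1:n_2$ weighted averages, and letting $n_1/(n_1+n_2)\to\lambda$ and the slack $\epsilon\to0$ yields $\lambda\mathbf{R}+(1-\lambda)\mathbf{R}'\in\ratereg(\lambda\mathbf{D}+(1-\lambda)\mathbf{D}')$, which is precisely the convexity property the converse proof of Theorem~\ref{main1} uses when it averages the single-letterized rates and distortions over $\blockindex$. Your proof is complete and is the argument the authors evidently had in mind.
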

The first part of the converse proof follows the standard steps in \cite{MaIshwarSomeResults}.
For an odd $\msgindex$ we have:
\begin{IEEEeqnarray}{rCl}
     n(\rate_\msgindex+\epsilon) &\geq& H(\msg_\msgindex) \nonumber\\
     &\geq& I(\X^n;\msg_\msgindex|\msg^{\msgindex-1},\Y^n) \nonumber\\
     &=& \sum_{\blockindex=1}^{n}\bigg(H(\X_\blockindex|\X^{\blockindex-1},\msg^{\msgindex-1},\Y_\blockindex^n) \nonumber\\
     && -H(\X_\blockindex|\X^{\blockindex-1},\msg^{\msgindex},\Y_\blockindex^n)\bigg) \label{markov X_j}\\
     &=&\sum_{\blockindex=1}^{n}I(\X_\blockindex;\U_{\msgindex}|\U_{1,\blockindex},\U_{2}^{\msgindex-1},\Y_\blockindex)  \label{Udef}
\end{IEEEeqnarray}
Where (\ref{markov X_j}) follows from chain rule and using the Markov chain $\X_\blockindex-(\X^{\blockindex-1},\msg^{\indx},\Y_\blockindex^n)-\Y^{\blockindex-1}$ for any $\indx=1,\dots,\round$ and any $\blockindex=1,\dots,n$ and (\ref{Udef}) holds by defining $\U_{1,\blockindex}\triangleq (\msg_1,\X^{\blockindex-1},\Y_{\blockindex+1}^{n})$ for $\blockindex=1,\dots,n$ and $\U_{\indx}\triangleq\msg_\indx$ for $\indx=2,\dots,\round$. Similarly for even $\msgindex$ we get:
\begin{IEEEeqnarray}{rCl}
     n(\rate_\msgindex+\epsilon) &\geq& \sum_{\blockindex=1}^{n}I(\Y_\blockindex;\U_{\msgindex}|\U_{1,\blockindex},\U_{2}^{\msgindex-1},\X_\blockindex)
\end{IEEEeqnarray}
It is easy to see that $\ExtraA{}_{,\blockindex}=\X_{\blockindex+1}^n$ and $\ExtraB{}_{,\blockindex}=\Y^{\blockindex-1}$ satisfy Markov chains $\ExtraA{}_{,\blockindex}-\X_\blockindex-\Y_\blockindex-\ExtraB{}_{,\blockindex}$ and also
\begin{IEEEeqnarray*}{ll}
  \U_{1,\blockindex}-(\X_\blockindex,\ExtraA{}_{,\blockindex})-\Y_\blockindex,\\
  \U_{\msgindex}-(\X_\blockindex,\ExtraA{}_{,\blockindex},\U_{1,\blockindex},\U_2^{\msgindex-1})-\Y_\blockindex, & \verb" "\hbox{ $\msgindex>1$ odd;}\\
  \U_{\msgindex}-(\Y_\blockindex,\ExtraB{}_{,\blockindex},\U_{1,\blockindex},\U_2^{\msgindex-1})-\X_\blockindex, & \verb" "\hbox{ $\msgindex>1$ even.}
\end{IEEEeqnarray*}
for any $\blockindex=1,\dots,n$.

Let $\gaYi^{(n)}$ be the function that maps $(\X^n,\msg^\round)$ to the $\blockindex^{th}$ symbol of $\gaY^{(n)}(\X^n,\msg^\round)$, and $\Da{}_{,\blockindex}=E\da(\Za{}_{,\blockindex},\gaYi^{(n)})$ be the corresponding distortion. Therefore, $\frac{1}{n}\sum_{\blockindex=1}^n\Da{}_{,\blockindex}\leq\Da$.
Similarly let $\gbXi^{(n)}$, $\gaXi^{(n)}$, and $\gbYi^{(n)}$ be the $\blockindex^{th}$ symbol of the corresponding functions with single letter distortions $\Db{}_{,\blockindex}$, $\Dab{}_{,\blockindex}$, and $\Dba{}_{,\blockindex}$.
We define decoding functions
\begin{IEEEeqnarray*}{l}
  \gaYi(\X_\blockindex,\ExtraA{}_{,\blockindex},\U_{1,\blockindex},\U_{2}^\round)\triangleq\gaYi^{(n)}(\X^n,\msg^\round),\label{single decoder 1}
\end{IEEEeqnarray*}
\begin{IEEEeqnarray*}{l}
  \gbXi(\Y_\blockindex,\ExtraB{}_{,\blockindex},\U_{1,\blockindex},\U_{2}^\round)\triangleq\gbXi^{(n)}(\Y^n,\msg^\round),\label{single decoder 3}\\
  \gaXi(\X_\blockindex,\ExtraA{}_{,\blockindex},\U_{1,\blockindex},\U_{2}^\round)\triangleq\gaXi^{(n)}(\X^n,\msg^\round),\label{single decoder 4}\\
  \gbYi(\Y_\blockindex,\ExtraB{}_{,\blockindex},\U_{1,\blockindex},\U_{2}^\round)\triangleq\gbYi^{(n)}(\Y^n,\msg^\round).\label{single decoder 2}
\end{IEEEeqnarray*}
To finish the proof we define
\begin{IEEEeqnarray*}{l}
    \IEEEeqnarraymulticol{1}{l}{\rate_{\msgindex,\blockindex}(\Da{}_{,\blockindex},\Db{}_{,\blockindex},\Dab{}_{,\blockindex},\Dba{}_{,\blockindex})\triangleq}\\ \qquad
    \left\{
            \begin{array}{ll}
              I(\X_\blockindex;\U_{\msgindex}|\U_{1,\blockindex},\U_{2}^{\msgindex-1},\Y_\blockindex),&\hbox{$\msgindex$ odd;} \\
              I(\Y_\blockindex;\U_{\msgindex}|\U_{1,\blockindex},\U_{2}^{\msgindex-1},\X_\blockindex),&\hbox{$\msgindex$ even.}
            \end{array}
          \right.
\end{IEEEeqnarray*}
We have
\begin{IEEEeqnarray*}{l}
  \rate_{\msgindex,\blockindex}(\Da{}_{,\blockindex},\Db{}_{,\blockindex},\Dab{}_{,\blockindex},\Dba{}_{,\blockindex})\in\\
  \verb"     "\ratereg_\msgindex(\Da{}_{,\blockindex},\Db{}_{,\blockindex},\Dab{}_{,\blockindex},\Dba{}_{,\blockindex})
\end{IEEEeqnarray*}
where $\ratereg_\msgindex(\Da{}_{,\blockindex},\Db{}_{,\blockindex},\Dab{}_{,\blockindex},\Dba{}_{,\blockindex})$ is the $\msgindex^{th}$ component of $\ratereg(\Da{}_{,\blockindex},\Db{}_{,\blockindex},\Dab{}_{,\blockindex},\Dba{}_{,\blockindex})$. Then
\begin{IEEEeqnarray}{rCll}
  n(\rate_\msgindex+\epsilon)&\geq& \sum_{\blockindex=1}^{n}&\rate_{\msgindex,\blockindex}(\Da{}_{,\blockindex},\Db{}_{,\blockindex},\Dab{}_{,\blockindex},\Dba{}_{,\blockindex}) \nonumber\\
     &\geq& n \rate_\msgindex\bigg(&\frac{1}{n}\sum_{\blockindex=1}^{n}\Da{}_{,\blockindex},\frac{1}{n}\sum_{\blockindex=1}^{n}\Db{}_{,\blockindex},\frac{1}{n}\sum_{\blockindex=1}^{n}\Dab{}_{,\blockindex},\nonumber\\
     &&&\frac{1}{n}\sum_{\blockindex=1}^{n}\Dba{}_{,\blockindex}\bigg) \IEEEeqnarraynumspace\label{convexity of R}\\
     &\geq& n \rate_\msgindex(&\Da,\Db,\Dab,\Dba) \label{monotonicity of R}
\end{IEEEeqnarray}
Where (\ref{convexity of R}) and (\ref{monotonicity of R}) follow from Lemma \ref{convexity}, convexity and monotonicity of the rate-region, respectively. Therefore $\rate_\msgindex$ should also be in $\ratereg_\msgindex(\Da,\Db,\Dab,\Dba)$, completing the converse.

The cardinality of $\ExtraA$ and $\ExtraB$ can be upper bounded using techniques similar to \cite{Lapidoth2013}. Moreover, by considering super-sources $(\X,\ExtraA)$ and $(\Y,\ExtraB)$, and using similar arguments as \cite{MaIshwarSomeResults}, and taking into account that we have two extra reconstruction conditions to satisfy, we can get the cardinality bounds for $\U_\msgindex$, $\msgindex=1,\dots,\round$.

\bibliographystyle{IEEEtran}
\bibliography{ITWref}
%

\end{document}